\def\BibTeX{{\rm B\kern-.05em{\sc i\kern-.025em b}\kern-.08em
    T\kern-.1667em\lower.7ex\hbox{E}\kern-.125emX}}
\newtheorem{prop}{Proposition}
\newtheorem{theorem}{Theorem}
\newtheorem{construction}{Construction}
\newtheorem{definition}{Definition}
\newtheorem{example}{Example}
\newtheorem*{remark*}{Remark}
\begin{document}

\title{On cyclic LRC codes that are also LCD codes}

\author{
	\IEEEauthorblockN{Charul Rajput }
	\IEEEauthorblockA{\textit{Department of Mathematics}\\ 
                                                   \textit{ Indian Institute of Technology Roorkee}\\
		   			  Roorkee, Uttarakhand\\
					  India\\
				   	  Email: charul.rajput9@gmail.com\\
\thanks{First author is financially supported by University Grants Commission (UGC), India, under grant code 6405-11-061.}}

  \and
   \IEEEauthorblockN{Maheshanand Bhaintwal}
   \IEEEauthorblockA{\textit{Department of Mathematics}\\ 
                                   \textit{Indian Institute of Technology Roorkee}\\
                   	            Roorkee, Uttarakhand\\
  	                 India\\
  	                 Email: mahesfma@iitr.ac.in}
   \and
   \IEEEauthorblockN{Ramakrishna Bandi}
   \IEEEauthorblockA{\textit{Department of Mathematics}\\ 
                                \textit{Dr. SPM International Institute of}\\
   	                 \textit{Information Technology Naya Raipur}\\
                   	 Chhattisgarh, India\\
   	                 Email: ramakrishna@iiitnr.edu.in}

}

 \IEEEpubid{978-1-7281-9180-5/20/\$31.00 ©2020 IEEEIn} 

\maketitle

\begin{abstract}
Locally recoverable (LRC) codes provide a solution to single node failure in distributed storage systems, where it is a very common problem. On the other hand, linear complementary dual (LCD) codes are useful in fault injections attacks on storage systems. In this paper, we establish a connection  between LRC codes and LCD codes. We derive some conditions on the construction of cyclic LRC codes so that they are also LCD codes. A lower bound on the minimum distance of such codes is determined. Some examples have been given to explain the construction.
\end{abstract}

\begin{IEEEkeywords}
recovering set, cyclic codes, reversible codes
\end{IEEEkeywords}

\section{Introduction}
Due to exponential growth of data, node failures has become a common phenomenon in distributed and cloud storage systems. The most common scenario in this is the failure of a single node. The existing methods of data recovery do not address this problem efficiently. Locally recoverable (LRC) codes provide an efficient solution to this problem. They have been therefore studied quite extensively in the literature in recent years \cite{HCL2007, HL2007, GHSY2012}. A linear code $C$ is said to be an LRC code if the value at any coordinate position of a codeword in $C$ can be determined by accessing a small number of other coordinates of the codeword. The set of  coordinate positions that are used to determine the value at a specified coordinate position is called the recovering set for that coordinate position. If the size of the recovering set for every coordinate in $C$ is less than or equal $r$, where $r$ is a positive integer, then $C$ is said to have locality $r$. Smaller locality represents better local recovery.

Constructions of LRC codes have been studied in a number of recent papers \cite{GHSY2012, SRKV2013, TB2014, GHJY2014, TPD2016}. Tamo and Barg \cite{TB2014} have given a construction of LRC codes which is a generalization of the construction of classical Reed-Solomon (RS) codes. These codes are known as RS-like LRC codes. Bounds on the minimum distance of LRC codes have also been studied by several researchers \cite{GHSY2012, TB2014, CM2015, TBF2016}.  A  generalization of the classical Singleton bound for LRC codes is given in \cite{GHSY2012}. Codes whose minimum distance attains this bound are called optimal LRC codes. Several constructions of optimal LRC codes have been proposed in the literature \cite{SRKV2013, TB2014,  GHJY2014, TPD2016}.   Goparaju et al. \cite{GC2014} have studied binary cyclic LRC codes and have given a construction of LRC codes using the zeros of the generator polynomial.  In \cite{TBGC2015}, Tamo et al. have given a construction of RS-like cyclic LRC codes.

A linear code $C$ whose intersection with its dual is the zero code, i.e., $C \cap C^{\perp}=\{0\}$ or equivalently, $C \oplus C^{\perp}=\mathbb{F}_q^n$, is called a linear complementary dual (LCD) code.
LCD codes were introduced by Massey  \cite{Mas92} in 1992.  These codes are proved useful in communication systems, data storage and cryptography. Recently, in \cite{CG16}, Carlet and Guilley showed that LCD codes helps to secure the communication against the side-channel attacks and fault injection attacks. In \cite{YM94}, Yang and Massey  proved that reversible cyclic codes
over finite fields are LCD codes by providing a necessary and sufficient condition for a cyclic code to have complementary dual. LCD codes are well explored in the literature  \cite{Sen04,  CMTQ18, CMTQP18, LL15}.


LRC codes that are also LCD codes may prove more useful for distributed data storage systems  as they provide protection against node failures and also protection against side-channel attacks and fault injection attacks on such systems.

 In this paper we study cyclic LRC codes that are also LCD codes. We call such codes (cyclic) LRC-LCD codes. We derive some conditions on the construction of cyclic LRC codes so that the codes so constructed are also LCD codes. A lower bound on the minimum distance of these codes is determined.  Some examples have been given to explain the construction.

\section{Preliminaries}
\IEEEpubidadjcol
Let $\mathbb{F}_q$ be a finite field of size $q$, where $q$ is a prime power. An $[n,k,d]$-linear code $C$ is a subspace of $\mathbb{F}_q^n$ with dimension $k$ , where $d$ denotes the minimum Hamming distance.  The dual of $C$, denoted by  $C^{\perp}$, is the set of all vectors of $\mathbb{F}_q^n$ that are orthogonal to all elements of $C$, i.e., $C^{\perp}=\{ x \in \mathbb{F}_q^n \ | \ x \cdot c = 0 \ \forall \ c \in C \}$. $C$ is called a cyclic code if codewords in $C$ are invariant under the cyclic shifts, i.e.,  for every $c=(c_0, c_1, \ldots, c_{n-1}) \in C$, $\sigma(c)=(c_{n-1},c_0, c_1, \ldots, c_{n-2}) \in C$. If each vector $a=(a_0, a_1, \ldots, a_{n-1}) $ of $\mathbb{F}_q^n$corresponds to the  polynomial  $a(x)=a_0+a_1x+\ldots+a_{n-1}x^{n-1}$ of $\mathbb{F}_q[x]$, then a cyclic code $C$ of length $n$ is an ideal of the quotient ring $R=\frac{\mathbb{F}_q[x]}{\langle x^n-1 \rangle}$. Since $R$ is a principal ideal ring, $C$ is generated by a unique monic polynomial $g(x)$, called the generator polynomial of $C$. The zeros of $g(x)$ in its splitting field over $\mathbb{F}_q$ are called the zeros of $C$, and the set of all zeros of $g(x)$ is called the defining set of $C$. The polynomial $h(x)=(x^n-1)/g(x)$ is called a parity-check polynomial of $C$ and its reciprocal polynomial $h^*(x)=x^{\deg h(x)} h(1/x)$ generates the dual code $C^{\perp}$.

A code $C \subseteq \mathbb{F}_q^n$ is said to be a reversible code if for every $(c_0, c_1, \ldots, c_{n-1}) \in C$, the word $(c_{n-1}, c_{n-2}, \ldots, c_0)$ is also in $C$.

Now we give some basic definitions and results on LRC codes and cyclic LCD codes (reversible cyclic codes).

\begin{definition}[LRC codes]
	Let $C$ be an $[n,k,d]$ code over $\mathbb{F}_q$. Then $C$ is said to have locality $r$ if for every $c=(c_1,c_2, \ldots, c_n) \in C$, $1 \le i \le n$, there exists a subset $R_i \subset \{1,2, \ldots, n\} \backslash \{i\}, |R_i| \leq r$, such that $c_i$ can be recovered by using the values of the coordinates associated with $R_i$. The set $R_i$ is called a recovering set of the $i^{\text{\tiny{th}}}$ coordinate and code $C$ is called an $(n,k,r)$ LRC code.
\end{definition}

In \cite{TB2014}, a bound on the minimum distance of LRC codes has been given. If $C$ is an $(n,k,r)$ LRC code then the minimum distance $d$ satisfies, 
\begin{equation}\label{eq1}
d \leq n-k- \left\lceil \frac{k}{r} \right\rceil +2~.
\end{equation}
An LRC code is called an optimal LRC code if its minimum distance attains the above bound.

Let $n$ be a positive integer such that $n|(q-1)$. Let $k$ and $r$ be positive integers such that $(r+1)|n$ and $r|k$. The following result gives a construction of RS-like cyclic LRC codes of length $n$.
\begin{theorem} \cite{TBGC2015} \label{th1}
Let $\alpha$ be a primitive $n$-th root of unity, $\ell$ be an integer with  $0 \leq \ell < r$, and $b \geq 1$ be an integer such that \emph{gcd}$(b,n)=1$. Let $\mu=\frac{k}{r}$. Consider the following sets of elements of $\mathbb{F}_q:$
	$$L=\{ \alpha^i \ | \ i \bmod (r+1)=\ell \}~,$$
	and
	$$D=\{ \alpha^{j+sb} \ | \ s=0,1,\ldots, n-\mu (r+1) \}~,$$
	where $\alpha^j \in L$. Then the cyclic code with the defining set $L \cup D$ is an optimal $(n,k,r)$ cyclic LRC code over $\mathbb{F}_q$.
\end{theorem}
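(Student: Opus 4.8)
The plan is to establish the three assertions in turn: that the cyclic code $C$ with zero set $L\cup D$ has dimension $k$, that it has locality at most $r$, and that its minimum distance equals $n-k-\lceil k/r\rceil+2$. The third part will come for free by sandwiching $d(C)$ between a BCH-type lower bound and the upper bound \eqref{eq1}, once the first two are in hand. Throughout I use that $n\mid q-1$ makes $x^n-1$ split over $\mathbb{F}_q$, so $g(x)=\prod_{\beta\in L\cup D}(x-\beta)$ and $\dim C=n-|L\cup D|$.

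For the dimension I would just count $|L\cup D|$. Clearly $|L|=n/(r+1)$. Since $\gcd(b,n)=1$, the exponents $j+sb$ for $0\le s\le n-\mu(r+1)$ are pairwise distinct mod $n$, so $|D|=n-\mu(r+1)+1$. To compute $|L\cap D|$, note that because $\alpha^{j}\in L$ we have $\alpha^{j+sb}\in L$ iff $sb\equiv 0\pmod{r+1}$; and $\gcd(b,r+1)=1$ (automatic from $\gcd(b,n)=1$ as $(r+1)\mid n$), so this happens iff $(r+1)\mid s$. A short count over the range of $s$ gives $|L\cap D|=n/(r+1)-\mu+1$, whence $|L\cup D| = n/(r+1)+(n-\mu(r+1)+1)-(n/(r+1)-\mu+1) = n-\mu r = n-k$, so $\dim C=k$.

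For the locality, the key observation is that the $L$-part of $g(x)$ collapses to a binomial: grouping the $n$-th roots of unity by residue mod $r+1$ gives $\prod_{\beta\in L}(x-\beta)=x^m-\alpha^{\ell m}$ with $m=n/(r+1)$. Hence $C\subseteq C'':=\langle x^m-\alpha^{\ell m}\rangle$, so $C^{\perp}\supseteq (C'')^{\perp}$. Writing $\zeta=\alpha^{\ell m}$ (an $(r+1)$-th root of unity), the parity-check polynomial of $C''$ is $(x^n-1)/(x^m-\zeta)=\sum_{j=0}^{r}\zeta^{r-j}x^{mj}$, whose reciprocal $\sum_{i=0}^{r}\zeta^{i}x^{mi}$ generates $(C'')^{\perp}$. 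This is a weight-$(r+1)$ word supported on $\{0,m,2m,\dots,rm\}$, and all of its cyclic shifts lie in $C^{\perp}$; the shift supported on $\{p,p+m,\dots,p+rm\}\bmod n$ recovers $c_p$ from the $r$ coordinates $\{p+m,\dots,p+rm\}\bmod n$ (these are distinct and differ from $p$ since $(r+1)\nmid i$ for $1\le i\le r$). So every coordinate has a recovering set of size $r$, and $C$ is an $(n,k,r)$ LRC code.

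Finally, for optimality I would relabel by $\gamma:=\alpha^{b}$, which is again a primitive $n$-th root of unity since $\gcd(b,n)=1$; with respect to $\gamma$, the set $D$ becomes a block of $n-\mu(r+1)+1$ consecutive powers of $\gamma$, so the BCH bound yields $d(C)\ge n-\mu(r+1)+2 = n-k-\lceil k/r\rceil+2$, using $r\mid k$ (so $\mu(r+1)=k+\lceil k/r\rceil$). Since $C$ is an $(n,k,r)$ LRC code, \eqref{eq1} gives the reverse inequality $d(C)\le n-k-\lceil k/r\rceil+2$, hence equality and optimality. I expect the crux to be the locality step — recognizing the collapse of the $L$-factor of $g$ to $x^m-\alpha^{\ell m}$ and reading the weight-$(r+1)$ dual word off its reciprocal parity-check polynomial; the dimension count and the BCH estimate are routine bookkeeping once the relabeling by $\alpha^{b}$ is made.
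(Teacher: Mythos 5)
The paper does not prove this statement --- it is quoted verbatim from \cite{TBGC2015} as a known result --- so there is no in-paper argument to compare against. Your reconstruction is correct and follows essentially the standard route of the cited source: the inclusion--exclusion count giving $|L\cup D|=n-k$; the collapse of $\prod_{\beta\in L}(x-\beta)$ to $x^{m}-\alpha^{\ell m}$ with $m=n/(r+1)$, whose reciprocal parity-check polynomial yields the weight-$(r+1)$ dual words supported on the cosets $\{p,p+m,\dots,p+rm\}$ that furnish the recovering sets; and the BCH bound after relabelling by the primitive root $\gamma=\alpha^{b}$ (valid since the BCH bound may be applied with respect to any primitive $n$-th root of unity), sandwiched against the Singleton-type bound \eqref{eq1}. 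All the small verifications you lean on check out: $\gcd(b,r+1)=1$ follows from $(r+1)\mid n$, the exponents $j+sb$ are distinct modulo $n$ because the range of $s$ has length at most $n$, and $\mu(r+1)=k+\lceil k/r\rceil$ uses $r\mid k$ as you note.
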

Here the set $L$ determines the locality and the set $D$ ensures the minimum distance of the code.

\begin{theorem}\cite{YM94,MS1977} \label{th2}
Let $C$ be a cyclic code of length $n$ over $\mathbb{F}_q$ with the generator polynomial $g(x)$. Then $C$ is an LCD code if one of the following conditions holds:\\
		$1.$ $g(x)$ is self reciprocal.\\
		$2.$ Inverse of every zero of $g(x)$ is also a zero of $g(x)$.\\
		$3.$ $q^{\ell} \equiv -1 \bmod n$ for some positive integer $\ell$.
\end{theorem}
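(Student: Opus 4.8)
The plan is to reduce all three hypotheses to a single statement about the set of zeros of $g(x)$. Write $U_n=\{\beta:\beta^n=1\}$ for the group of $n$-th roots of unity in the splitting field of $x^n-1$ over $\mathbb{F}_q$; since $\gcd(n,q)=1$ the polynomial $x^n-1$ is squarefree, so $g(x)=\prod_{\beta\in Z}(x-\beta)$ where $Z\subseteq U_n$ is the defining set of $C$. Put $Z^{-1}=\{\beta^{-1}:\beta\in Z\}$. I would first show that $C$ is LCD if and only if $Z=Z^{-1}$, and that this is precisely the condition that $g(x)$ is self-reciprocal; this handles items~1 and~2 simultaneously, after which item~3 is a short Frobenius argument.

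For the equivalence, the key input is the description of $C^{\perp}$ recalled in the preliminaries: the parity-check polynomial $h(x)=(x^n-1)/g(x)$ has zero set $U_n\setminus Z$, so its reciprocal $h^*(x)$, which generates $C^{\perp}$, has zero set $(U_n\setminus Z)^{-1}=U_n\setminus Z^{-1}$ (inversion being a permutation of $U_n$); thus the monic generator polynomial of $C^{\perp}$ is $\tilde h(x)=\prod_{\beta\in U_n\setminus Z^{-1}}(x-\beta)$, a divisor of $x^n-1$. Then I would use the standard fact that for cyclic codes the intersection $C\cap C^{\perp}$ is the cyclic code generated by $\mathrm{lcm}\bigl(g(x),\tilde h(x)\bigr)$; by squarefreeness this lcm equals $\prod_{\beta\in Z\cup(U_n\setminus Z^{-1})}(x-\beta)$, which is $x^n-1$ exactly when $Z\cup(U_n\setminus Z^{-1})=U_n$, i.e. when $Z^{-1}\subseteq Z$, i.e. (comparing cardinalities) when $Z=Z^{-1}$. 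Since $C\cap C^{\perp}=\{0\}$ iff its generator polynomial is $x^n-1$, this gives $C$ LCD $\iff Z=Z^{-1}$. Finally, the reciprocal $g^*(x)=x^{\deg g}g(1/x)$ has zero set $Z^{-1}$, so $g^*$ and $g$ agree up to the nonzero scalar $g(0)$ precisely when $Z=Z^{-1}$; this is the sense in which $g$ is self-reciprocal, so items~1 and~2 are equivalent and each is equivalent to $C$ being LCD.

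For item~3: if $q^{\ell}\equiv-1\pmod n$ then $q^{\ell}\equiv n-1\pmod n$, so for every $\beta\in U_n$ we have $\beta^{q^{\ell}}=\beta^{n-1}=\beta^{-1}$. Because $g(x)\in\mathbb{F}_q[x]$, its zero set is stable under the Frobenius map $x\mapsto x^{q}$ and hence under its $\ell$-fold composite $x\mapsto x^{q^{\ell}}$; therefore $\beta\in Z$ implies $\beta^{-1}=\beta^{q^{\ell}}\in Z$, so $Z=Z^{-1}$ and $C$ is LCD by the previous paragraph.

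The main obstacle I anticipate is the intersection step: one must justify carefully that $C\cap C^{\perp}$ is generated by $\mathrm{lcm}(g,\tilde h)$ and that the equation $\mathrm{lcm}(g,\tilde h)=x^n-1$ translates into the covering condition $Z\cup(U_n\setminus Z^{-1})=U_n$ on root sets — it is exactly here that squarefreeness of $x^n-1$ (i.e. $\gcd(n,q)=1$) is used. The reciprocal-polynomial bookkeeping and the Frobenius argument for item~3 are routine.
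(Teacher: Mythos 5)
The paper does not prove this theorem --- it is quoted directly from \cite{YM94,MS1977} --- so there is no in-paper argument to compare against; your proposal is a correct, self-contained proof that follows essentially the standard Yang--Massey route (reduce everything to closure of the defining set under inversion, identify $C\cap C^{\perp}$ with the cyclic code generated by $\mathrm{lcm}(g,\tilde h)$, and handle condition~3 by a Frobenius argument). The only point worth flagging is that you silently assume $\gcd(n,q)=1$; this is genuinely needed (without it, self-reciprocity of $g$ alone does not suffice and one must also match multiplicities of irreducible factors against $x^n-1$), but it holds in every setting the paper uses, since $n$ always divides $q^m-1$.
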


Let $q$ be a prime power and $n,m$ be two positive integers such that $n | (q^m-1)$. Then the $q$-cyclotomic coset of $a$ modulo $n$, where $a \in \{ 0,1, \ldots, n-1\}$, is defined as
$$ [a]= \{a, aq, aq^2, \ldots, aq^{r-1} \}~,$$
where $r$ is the smallest positive integer such that $aq^r \equiv a \bmod n$.
Following results are straightforward and will be useful for our discussion later.
\begin{prop}
	Let $a \in \{ 0,1, \ldots, n-1\}$ and let $[a]$ represent the $q$-cyclotomic coset of $a$ modulo $n$. Then for any $b \in [a]$, $n-b \in [n-a]$.
\end{prop}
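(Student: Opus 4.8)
The plan is to unwind the definition of the $q$-cyclotomic coset and reduce the claim to a one-line congruence computation. Since $n \mid (q^m-1)$, we have $\gcd(q,n)=1$, so multiplication by $q$ permutes the residues modulo $n$ and the cyclotomic cosets partition $\{0,1,\ldots,n-1\}$; this is essentially the only structural fact I will use, and it guarantees that the expressions below, once reduced modulo $n$, land back in $\{0,1,\ldots,n-1\}$.

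First I would take $b \in [a]$ and invoke the definition: there is a nonnegative integer $i$ with $b \equiv a q^i \pmod n$, where $b$ is identified with its representative in $\{0,1,\ldots,n-1\}$. Next I would compute, working modulo $n$,
\[
n - b \equiv -b \equiv -a q^i \equiv (n-a) q^i \pmod n,
\]
the last step because $(n-a)q^i = nq^i - aq^i \equiv -aq^i \pmod n$. Hence the representative of $n-b$ modulo $n$ is one of the elements $(n-a),\,(n-a)q,\,(n-a)q^2,\ldots$ that occur in the $q$-cyclotomic coset of $n-a$, i.e.\ $n-b \in [n-a]$, which is exactly the assertion.

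Finally I would dispose of the degenerate case: if $a=0$ then $b=0$ and both $n-b$ and $n-a$ reduce to $0$ modulo $n$, so the statement holds trivially; for $a\neq 0$ (hence $b\neq 0$) the representative of $-b$ in $\{0,1,\ldots,n-1\}$ is precisely $n-b$, so no extra bookkeeping is required. I do not expect any genuine obstacle: the proposition is really the remark that the involution $x \mapsto n-x$ on $\mathbb{Z}/n\mathbb{Z}$ commutes with multiplication by $q$ and therefore sends $q$-cyclotomic cosets to $q$-cyclotomic cosets; the only thing to be careful about is the reduction modulo $n$ and the trivial case $a=0$.
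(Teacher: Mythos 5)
Your proof is correct and is precisely the straightforward congruence argument the paper has in mind (the paper states this proposition without proof, calling it straightforward): from $b \equiv aq^i \pmod{n}$ one gets $n-b \equiv (n-a)q^i \pmod{n}$, so $n-b$ lies in $[n-a]$. The handling of reduction modulo $n$ and the trivial case $a=0$ is careful and adds nothing beyond what is needed.
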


\begin{prop}
	Let $a \in \{ 0,1, \ldots, n-1\}$. Then the cardinality of $[a]$ is a divisor of $\emph{Ord}_n(q)$. Further, $|[1]|=\emph{Ord}_n(q)$.
\end{prop}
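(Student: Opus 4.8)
The plan is to argue directly from the defining property of the cyclotomic coset. Write $r = |[a]|$, so that $r$ is by definition the least positive integer with $aq^{r} \equiv a \pmod n$, and put $m = \mathrm{Ord}_n(q)$, the least positive integer with $q^{m} \equiv 1 \pmod n$ (this order is well defined since $\gcd(q,n)=1$, as $n\mid q^m-1$). First I would record, by a one-line induction on $\ell$, that $aq^{\ell r} \equiv a \pmod n$ for every $\ell \ge 0$.

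The heart of the argument is to identify the set $S = \{\, s \in \mathbb{Z}_{\ge 0} : aq^{s} \equiv a \pmod n \,\}$ with $r\mathbb{Z}_{\ge 0}$. The inclusion $r\mathbb{Z}_{\ge 0}\subseteq S$ is the fact just noted. For the reverse inclusion, given $s\in S$ I would write $s = \ell r + t$ with $0\le t < r$ and compute $aq^{s} = (aq^{\ell r})q^{t}\equiv aq^{t}\pmod n$; since also $aq^{s}\equiv a$, this yields $aq^{t}\equiv a\pmod n$, and minimality of $r$ forces $t=0$. This division-and-minimality step is the only place where any care is needed; everything else is formal. With $S=r\mathbb{Z}_{\ge 0}$ in hand, the first claim follows at once: $q^{m}\equiv 1$ gives $aq^{m}\equiv a$, so $m\in S$ and hence $r\mid m$, i.e.\ $|[a]|$ divides $\mathrm{Ord}_n(q)$. (Equivalently, one may phrase this via the action of the cyclic group $\langle q\rangle \le (\mathbb{Z}/n)^{\times}$ on $\mathbb{Z}/n$ by multiplication: $[a]$ is the orbit of $a$, so $|[a]|$ divides $|\langle q\rangle| = \mathrm{Ord}_n(q)$ by orbit--stabiliser.)

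For the final assertion I would simply specialise to $a=1$: $|[1]|$ is the least positive integer $r$ with $q^{r}\equiv 1\pmod n$, which is precisely the definition of $\mathrm{Ord}_n(q)$. It also drops out of the first part, since $q^{|[1]|}\equiv 1\pmod n$ forces $\mathrm{Ord}_n(q)\le|[1]|$, whence equality.
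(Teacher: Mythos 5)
Your proof is correct; the paper states this proposition without proof (it is introduced as one of two ``straightforward'' results), so there is nothing to compare against, and your division-algorithm-plus-minimality argument (equivalently, the orbit--stabiliser phrasing for the action of $\langle q\rangle$ on $\mathbb{Z}/n\mathbb{Z}$) is exactly the standard one. The only point you pass over silently is the identification $|[a]|=r$ itself, i.e.\ that $a, aq, \ldots, aq^{r-1}$ are pairwise distinct modulo $n$; this follows from the invertibility of $q$ modulo $n$ (guaranteed by $n \mid q^{m}-1$) together with the same minimality of $r$ you already invoke.
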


\section{Main results}

\subsection{Binary LRC-LCD codes}
Goparaju et al. have presented some constructions of binary cyclic LRC codes in \cite{GC2014}. In this subsection we determine some conditions for these codes to be LCD codes.
\begin{construction}[\cite{GC2014}]
	Let $n = 2^m-1$, $r + 1$ be a factor of $n$, and $\alpha$ be a primitive element of $\mathbb{F}_{2^m}$. Let $C$ be a binary cyclic code of length $n$ with the generator polynomial $g(x)$ having zeroes $\alpha^{j(r+1)}$, where $j$ ranges from $0$ to $\left(\frac{n}{r+1}\right)-1$. Then $C$ is an LRC code with locality $r$ and dimension $k = \frac{rn}{r + 1}$.	
\end{construction}

\begin{theorem}
	Code $C$ constructed in Construction $1$ is an LCD code.	
\end{theorem}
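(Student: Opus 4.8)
The plan is to invoke the second criterion of Theorem~\ref{th2}: I will show that the inverse of every zero of $g(x)$ is again a zero of $g(x)$, which immediately forces $C$ to be an LCD code. (Equivalently, I could show the zero set is stable under negation of exponents, so that $g(x)$ is self-reciprocal, and use the first criterion; this is the same computation.)

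First I would record the defining set explicitly. Reducing the exponents of the prescribed zeros $\alpha^{j(r+1)}$, $0 \le j \le \frac{n}{r+1}-1$, modulo $n$, the defining set of $C$ is
$T = \{\, i \in \{0,1,\dots,n-1\} : (r+1)\mid i \,\}$, a set of $\frac{n}{r+1}$ elements. As a sanity check that $g(x)$ genuinely has coefficients in $\mathbb{F}_2$, I would verify that $T$ is a union of $2$-cyclotomic cosets modulo $n$: writing $n=(r+1)t$ and reducing $2\cdot j(r+1)$ modulo $n$ produces $s(r+1)$ for some $0\le s<t$, so $T$ is closed under multiplication by $2$. This step is routine.

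The key step is then the observation that the inverse of the zero $\alpha^i$ is $\alpha^{-i}=\alpha^{\,n-i}$, so I need $n-i$ (taken modulo $n$) to lie in $T$ whenever $i\in T$. Since $(r+1)\mid n$ and $(r+1)\mid i$, we get $(r+1)\mid (n-i)$; for $i=0$ this gives $n-i=n\equiv 0\in T$, and for $0<i<n$ we have $0<n-i<n$ with $(r+1)\mid(n-i)$, hence $n-i\in T$. Thus the set of zeros of $g(x)$ is closed under taking inverses, and Theorem~\ref{th2}(2) yields that $C$ is LCD. There is essentially no hard part here; the only things to watch are the bookkeeping of exponents modulo $n$ — in particular the edge case $i=0$ — and the remark that the listed zeros already form a full defining set, both of which are immediate consequences of $(r+1)\mid n$.
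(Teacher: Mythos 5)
Your proposal is correct and follows essentially the same route as the paper: both show that the exponent set $\{\,i : (r+1)\mid i\,\}$ is closed under $i \mapsto n-i$ because $(r+1)\mid n$, and then invoke Theorem~\ref{th2}(2). Your added check that the defining set is closed under multiplication by $2$ is a reasonable extra verification but not part of the paper's argument.
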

\begin{proof}
	Let $Z$ be the set of zeros of $C$, i.e.,
	$$Z=\left\{\alpha^{j(r+1)} \ | \ j=0,1,\ldots, \left(\frac{n}{r+1}-1 \right)\right\}.$$
	Let $0 \leq i \leq \left( \frac{n}{r+1} -1 \right)$. Then $\alpha^{i(r+1)} \in Z$. Since $(r+1)|n$,
	$$\alpha^{-i(r+1)}=\alpha^{n-i(r+1)} \in Z.$$
    From Theorem \ref{th2}, $C$ is a reversible cyclic code and hence an LCD code.
\end{proof}

Construction $1$ gives distance optimal binary cyclic LRC codes with the minimum distance $2$. To improve the minimum distance, more zeros can be added to $Z$, and the generator polynomial be modified accordingly. It is well known that if $\alpha$ is a zero of $g(x)$ then every element of the cyclotomic coset of $\alpha$ is a zero of $g(x)$. The inverses of the elements of a cyclotomic coset form another cyclotomic coset modulo $n$, which may be the same coset as the original cyclotomic coset. Hence to maintain reversibility of the code, while adding some cyclotomic coset to $Z$, we need to also add to $Z$ the corresponding cyclotomic coset of inverse elements.

\begin{theorem}[Construction $2$]
	Let $n = 2^m-1$, $(r + 1)\ |\ n$ and $\alpha$ be a primitive element of $\mathbb{F}_{2^m}$. Let $g(x)$ be the  generator polynomial of a cyclic code $C$ having the set of zeroes
	$$Z=\left\{\alpha^{j(r+1)} \ | \ j=0,1,\ldots, \left(\frac{n}{r+1}-1 \right)\right\} \cup [\alpha] \cup [\alpha^{n-1}]~.$$
	Then $C$ is an LCD code and also an LRC with locality $r$, minimum distance $d \geq 6$ and the dimension $$k = \frac{r}{r+1}(2^m-1)-2m~.$$	
	
\end{theorem}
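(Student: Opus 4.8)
\emph{Proof plan.} The plan is to establish the four assertions---that $C$ is LCD, has locality $r$, has $d\ge 6$, and has the stated dimension---in turn, working throughout with the exponent set
$$E \;=\; \{\,j(r+1)\bmod n : 0\le j< n/(r+1)\,\}\;\cup\;[1]\;\cup\;[n-1],$$
where $[a]$ denotes the $2$-cyclotomic coset of $a$ modulo $n$, so that $Z=\{\alpha^e : e\in E\}$ with $[1]$ corresponding to $[\alpha]$ and $[n-1]$ to $[\alpha^{n-1}]$ (using $\alpha^{n-1}=\alpha^{-1}$). Since $n=2^m-1$ is odd, $x^n-1$ is separable over $\mathbb{F}_2$, so $g(x)=\prod_{e\in E}(x-\alpha^e)$ has degree $|E|$, and hence $k=n-|E|$.

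\emph{LCD.} I would check that $E$ is invariant under $a\mapsto n-a$ and then invoke the second condition of Theorem~\ref{th2}. The first component is invariant because $n-j(r+1)=(r+1)(n/(r+1)-j)$ is again a multiple of $r+1$ in the required range. For the rest, Proposition~1 shows $\{n-b : b\in[1]\}=[n-1]$ and $\{n-b : b\in[n-1]\}=[1]$, so $[1]\cup[n-1]$ is invariant as well. Hence $Z$ is closed under inversion, $C$ is reversible, and therefore LCD.

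\emph{Locality and distance.} Since $E$ contains the entire defining exponent set $\{\,j(r+1):0\le j<n/(r+1)\,\}$ of the LRC code of Construction~1, the generator polynomial of that code divides $g(x)$, so $C$ is a subcode of it; as every recovering set for a coordinate of the larger code is still a recovering set in $C$, the locality $r$ is inherited. For the distance bound I would invoke the BCH bound: $0\in E$ (the $j=0$ term), $1,2\in[1]\subseteq E$, and $n-2,n-1\in[n-1]\subseteq E$ (since $2(n-1)\equiv n-2$), so $E$ contains the five consecutive residues $n-2,n-1,0,1,2$ and therefore $d\ge 6$.

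\emph{Dimension.} The substantive step is to show that the three components of $E$ are pairwise disjoint and that $|[1]|=|[n-1]|=m$. The size claim follows from Proposition~2, which gives $|[1]|=\mathrm{Ord}_n(2)=m$ (as $2^j\not\equiv 1\pmod{2^m-1}$ for $0<j<m$), together with Proposition~1, which makes $b\mapsto n-b$ a bijection of $[1]$ onto $[n-1]$. For disjointness, the nonzero elements of $[1]$ are $1,2,\dots,2^{m-1}$ and those of $[n-1]$ are $n-1,n-2,\dots,n-2^{m-1}$; since $r+1$ is an odd divisor of $n=2^m-1$ greater than $1$, no $2^i$ is a multiple of $r+1$, and (using $r+1\mid n$) neither is any $n-2^i$, so both cosets avoid the first component; and $[1]\cap[n-1]=\varnothing$ for $m\ge 3$, since an element of the intersection would give $2^\ell\equiv -1\pmod n$ for some $\ell$, impossible because the powers of $2$ modulo $2^m-1$ are exactly $1,2,\dots,2^{m-1}$, among which $2^m-2$ appears only when $m=2$. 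Thus $|E|=n/(r+1)+2m$ and $k=n-|E|=\tfrac{r}{r+1}(2^m-1)-2m$. I expect this disjointness bookkeeping---together with the implicit hypothesis $m\ge 3$ needed to ensure $[1]$ and $[n-1]$ are distinct cosets of size $m$---to be the only delicate point.
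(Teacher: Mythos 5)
Your proposal is correct and follows essentially the same route as the paper: LCD via closure of the defining set under $a\mapsto n-a$ (Theorem~\ref{th2}, condition 2), locality inherited from Construction~1, $d\ge 6$ from the five consecutive roots $\alpha^{n-2},\alpha^{n-1},\alpha^{0},\alpha^{1},\alpha^{2}$, and the dimension from $|Z|=\frac{n}{r+1}+2m$. The only difference is that you spell out the disjointness and cardinality bookkeeping --- including the observation that one needs $m\ge 3$ (not merely $m>1$ as the paper states) for $[\alpha]$ and $[\alpha^{-1}]$ to be distinct cosets of size $m$ --- which the paper simply asserts.
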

\begin{proof}
	Clearly $C$ is an LRC code by Construction $1$. For $n=2^m-1$, $m>1$, $\alpha$ and $\alpha^{-1}$ are in two distinct cyclotomic cosets modulo $n$. Since $Z$ contains $\alpha^{n-2}, \alpha^{n-1}, \alpha^0, \alpha^1, \alpha^2$ consecutive roots, the minimum distance of $C$ satisfies $d \geq 6$. The dimension of $C$ is
	$$k=n-|z| = n- \left( \frac{n}{r+1} + 2m \right) = \frac{r}{r+1}(2^m-1)-2m~.$$
	Further by Theorem \ref{th2}, $C$ is an LCD code since for every zero $\beta$ of $g(x)$, $\beta^{-1}$ is also a zero of $g(x)$.
\end{proof}

\begin{remark*}
	\begin{enumerate}
		\item 	If $r=2$, then $Z$ contains consecutive roots $\alpha^{n-4}, \alpha^{n-3}, \alpha^{n-2}, \alpha^{n-1}, \alpha^0, \alpha^1, \alpha^2, \alpha^3, \alpha^4$ of $g(x)$. Hence the minimum distance of the code $C$ is at least $10$ and its dimension is $$k=\frac{2}{3}(2^m-1)-2m~.$$
		\item The minimum distance of the code can further be improved by adding suitable cyclotomic cosets. If $r>2$ and $Z$ contains $\alpha^{j(r+1)}$, where $j=0,1,\ldots, \left( \frac{n}{r+1}-1 \right)$, and cyclotomic cosets $[\alpha], [\alpha^{n-1}], [\alpha^3]$ and $[\alpha^{n-3}]$, then the minimum distance of $C$ is at least $10$ and its dimension is
		$$k=n-|z| \geq n- \left( \frac{n}{r+1} + 4m \right) = \frac{r}{r+1}(2^m-1)-4m~.$$
	\end{enumerate}
\end{remark*}

\begin{example}
	Let $n=2^6-1$, $m=6$ and $r=2$. Consider the set
	$$Z=\left\{\alpha^{3j} \ | \ j=0,1,\ldots, 20 \right\} \cup [\alpha] \cup [\alpha^{62}]~.$$
	Let $C$ be the corresponding cyclic code. Clearly $Z$ contains $9$ consecutive roots of $C$ from $\alpha^{-4}$ to $\alpha^{4}$. Hence the minimum distance of $C$ is at least $10$ and the dimension is $k=30$.
\end{example}

\begin{example}
	Let $n=2^8-1$, $m=8$ and $r=4$. Consider the set
	$$Z=\left\{\alpha^{5j} \ | \ j=0,1,\ldots, 50 \right\} \cup [\alpha] \cup [\alpha^{254}] \cup [\alpha^3] \cup [\alpha^{252}]~.$$
	Let $C$ be the corresponding cyclic code. Clearly $Z$ contains $13$ consecutive roots of $C$ from $\alpha^{-6}$ to $\alpha^{6}$. Hence the minimum distance of $C$ is at least $14$ and the dimension is $k=172$.
\end{example}

\subsection{LRC-LCD codes over $\mathbb{F}_q$}

Now we consider constructions of LRC-LCD cyclic codes over an arbitrary finite field  $\mathbb{F}_q$ such that the code length $n$ divides $q-1$. Then $\mathbb{F}_q$  contains a primitive $n$th root of unity, and the size of each $q$-cyclotomic coset modulo $n$ is equal to one.

\begin{theorem} \label{th3}
	Let $n|(q-1)$ and $\alpha$ be a primitive $n$-th root of unity. Let $k$ and $r$ be integers such that $1 \leq r \leq k$, $(r+1)|n$ and $r|k$. Suppose $\left( \frac{n}{r+1}-\frac{k}{r}\right)$ is even. Consider two sets
	$$L=\{ \alpha^i \ | \ i \bmod(r+1) =0 \}~,$$
	and
	$$D=\{\alpha^s \ | \ s=-t, -t+1, \ldots, -1, 0, 1, \ldots, t-1, t\}~,$$
	where $t=\frac{1}{2}\left( n - \frac{k(r+1)}{r} \right)$. If $C$ is the cyclic code of length $n$ over $\mathbb{F}_q$ with the defining set $Z=L \cup D$, then $C$ is an optimal LRC code and also an LCD code.
	\end {theorem}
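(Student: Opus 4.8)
The plan is to reduce both assertions to results already in hand: the LRC property together with optimality will come from Theorem~\ref{th1} once we recognize the defining set $Z=L\cup D$ as an instance of the Tamo--Barg--Goparaju--Calderbank construction, and the LCD property will come from Theorem~\ref{th2} once we check that $Z$ is stable under $x\mapsto x^{-1}$. A preliminary remark to record is that, since $n\mid(q-1)$, every $q$-cyclotomic coset modulo $n$ is a singleton; hence $Z$ is exactly the set of zeros of the generator polynomial $g(x)$ of $C$, and "defining set'' and "set of zeros'' coincide throughout.

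For the optimality/LRC half, the first step is the algebraic identity
\[
t=\tfrac12\left(n-\tfrac{k(r+1)}{r}\right)=\tfrac{r+1}{2}\left(\tfrac{n}{r+1}-\tfrac{k}{r}\right),
\]
so the parity hypothesis makes $\tfrac12\bigl(\tfrac{n}{r+1}-\tfrac{k}{r}\bigr)$ a nonnegative integer and $t$ an integer multiple of $r+1$; in particular $\alpha^{-t}=\alpha^{n-t}\in L$. Then I would invoke Theorem~\ref{th1} with $\ell=0$, $b=1$, and $j=n-t$ (valid since $\gcd(1,n)=1$ and $\alpha^{j}=\alpha^{-t}\in L$). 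Writing $\mu=k/r$, the set $D$ of Theorem~\ref{th1} is $\{\alpha^{j},\alpha^{j+1},\dots,\alpha^{j+(n-\mu(r+1))}\}$, and since $n-\mu(r+1)=2t$ this is precisely $\{\alpha^{-t},\alpha^{-t+1},\dots,\alpha^{t}\}$, our $D$. Hence the cyclic code with defining set $L\cup D$ is the optimal $(n,k,r)$ cyclic LRC code provided by Theorem~\ref{th1}. (One should note in passing that $t\ge 0$, equivalently $n\ge k(r+1)/r$, is implicit in the setup of Theorem~\ref{th1}, so $D$ is a genuine nonempty symmetric interval of fewer than $n$ distinct powers and $g(x)\ne x^n-1$.)

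For the LCD half I would verify condition~$2$ of Theorem~\ref{th2}: the inverse of every zero of $g(x)$ is again a zero, i.e. $\alpha^{i}\in Z\Rightarrow\alpha^{-i}\in Z$. If $\alpha^{i}\in L$, then $(r+1)\mid i$; combined with $(r+1)\mid n$ this gives $(r+1)\mid(n-i)$, so $\alpha^{-i}=\alpha^{n-i}\in L$. If $\alpha^{s}\in D$, with the exponent chosen in $\{-t,\dots,t\}$, then $-t\le -s\le t$, so $\alpha^{-s}\in D$. Therefore $Z=L\cup D$ is closed under inversion, $C$ is a reversible cyclic code, and by Theorem~\ref{th2} it is an LCD code.

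I expect the only nonroutine point to be the one in the second paragraph: seeing that the evenness assumption on $\tfrac{n}{r+1}-\tfrac{k}{r}$ is exactly what forces $t$ to be a multiple of $r+1$, so that the \emph{symmetric} interval $\{-t,\dots,t\}$ can be realized as the "shifted'' set $D$ of Theorem~\ref{th1} anchored at the element $\alpha^{-t}$ of $L$. Once that identification is made, both conclusions are essentially bookkeeping against Theorems~\ref{th1} and~\ref{th2}.
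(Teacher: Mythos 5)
Your proposal is correct, and while the LCD half coincides exactly with the paper's argument (closure of $Z=L\cup D$ under $\alpha^i\mapsto\alpha^{-i}$, then condition~2 of Theorem~\ref{th2}), your treatment of the LRC/optimality half takes a genuinely different route. You observe that the evenness of $\frac{n}{r+1}-\frac{k}{r}$ forces $t=\frac{r+1}{2}\bigl(\frac{n}{r+1}-\frac{k}{r}\bigr)$ to be a multiple of $r+1$, hence $\alpha^{-t}=\alpha^{n-t}\in L$, and therefore $Z$ is literally an instance of Theorem~\ref{th1} with $\ell=0$, $b=1$, $j=n-t$: since $n-\mu(r+1)=2t$, the interval $\{\alpha^{j+s}\,:\,0\le s\le n-\mu(r+1)\}$ is exactly $\{\alpha^{-t},\dots,\alpha^{t}\}=D$. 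Optimality, dimension $k$ and locality $r$ then all follow in one stroke from Theorem~\ref{th1}. The paper instead borrows only the locality statement from Theorem~\ref{th1} and verifies the rest by hand: it computes $\dim C=n-|Z|=k$ by inclusion--exclusion, using $|L\cap D|=2\lfloor t/(r+1)\rfloor+1$ and the evenness hypothesis to remove the floor, and it obtains $d\ge 2t+2=n-k-\frac{k}{r}+2$ from the $2t+1$ consecutive roots in $D$ (a BCH-type bound), which meets the upper bound \eqref{eq1}. Your reduction is shorter and makes transparent \emph{why} the evenness hypothesis is imposed (it is exactly what lets the symmetric interval be anchored at an element of $L$); the paper's computation is more self-contained and, more importantly, is the template that Theorem~\ref{th4} generalizes to the case $r\nmid k$, where $a=t\bmod(r+1)$ may be nonzero, $\alpha^{-t}\notin L$, and your reduction to Theorem~\ref{th1} would no longer apply.
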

	
	\begin{proof}
		First we will prove that $C$ is an LCD code.
		From Theorem \ref{th2}, $C$ is an LCD code if and only if $\beta^{-1}$ is also a root of $g(x)$ whenever $\beta$ is a root of $g(x)$, where $g(x)$ is the generating polynomial of $C$.
		
		Let $\alpha^i \in Z=L \cup D$. If $\alpha^i \in D$ then clearly $\alpha^{-i} \in D$. If $\alpha^i \in L$ then $i \bmod (r+1)=0$. Therefore $\alpha^{n-i}=\alpha^{-i} \in L$, as $(n-i) \bmod (r+1) = 0$. So $C$ is an LCD code.
		
		Now we will prove that $C$ is an optimal LRC code. From Theorem \ref{th1}, the set $L$ determines the locality of code $C$ when $\ell=0$, i.e., for the present case. Dimension of $C$ is
\begin{align*}
 n - &|Z| = n-(|L|+|D|-|L \cap D|) \\
          &= n- \frac{n}{r+1}-(2t+1)+\left(2 \left \lfloor \frac{t}{r+1} \right \rfloor+1 \right) \\
	&= \frac{nr}{r+1}-2 \left( t- \left \lfloor \frac{t}{r+1}  \right \rfloor \right) \\
	&= \frac{nr}{r+1}-2\left( \frac{1}{2}\left( n - \frac{k(r+1)}{r} \right) - \left \lfloor \frac{1}{2}\left( \frac{n}{r+1} - \frac{k}{r} \right)  \right \rfloor \right)\\
	&= \frac{nr}{r+1} - \left( \frac{nr}{r+1}-k \right) = k~.
\end{align*}

Since $D$ contains $2t+1$ consecutive roots, the minimum distance $d$ of $C$ satisfies
		\begin{align*}
			d & \geq 2t+2 \\
			& = n- \frac{k(r+1)}{r}+2 \\
			&= n-k-\frac{k}{r}+2~.
		\end{align*}
	\end{proof}

	\begin{example}
		Let $q=37$, $n=36$ and $r=5$. Let $\alpha$ be a primitive $n$-th root of unity.
		Then we have $$L=\{ \alpha^0, \alpha^6, \alpha^{12}, \alpha^{18}, \alpha^{24}, \alpha^{30} \}~.$$
		For $k=20$, $\left( \frac{n}{r+1}-\frac{k}{r} \right)=2$ is even. So
		$$t=\frac{1}{2}\left( n- \frac{k(r+1)}{r} \right)=6.$$
		Hence we get $D=\{\alpha^{-6}, \alpha^{-5}, \alpha^{-4},\alpha^{-3}, \alpha^{-2}, \alpha^{-1}, \alpha^{0}, \alpha^{1}, \\ \alpha^{2}, \alpha^{3}, \alpha^{4}, \alpha^{5}, \alpha^{6}\}.$
		If $C$ is the cyclic code with the defining set $Z=L \cup D$, then $C$ is an LRC-LCD cyclic code of length $n=36$, dimension $k=20$, locality $r=5$ and the minimum distance $d \geq 14$.
		From (\ref{eq1}) we have
		$$d \leq n-k-\left \lceil \frac{k}{r} \right \rceil +2 = 14~.$$
		Therefore $d = 14$, and hence $C$ is an optimal LRC code.
	\end{example}

	In the next result we propose a construction of cyclic LRC-LCD codes of length $n$ over $\mathbb{F}_q$ without imposing the condition $r|k$. In some cases, codes are still distance optimal but in other cases minimum distance is reduced by one from the optimal value.
	\begin{theorem}\label{th4}
		Let $n|(q-1)$ and $\alpha$ be a primitive $n$-th root of unity. Let $k$ and $r$ be integers such that $1 \leq r \leq k$ and $(r+1)|n$. Let $t=\left\lfloor \frac{nr-k(r+1)}{2r}\right\rfloor$, and suppose $2r\bigm|\left( \frac{nr}{r+1}-k-2a \right)$, where $a=t \bmod (r+1)$. Consider two sets
		$$L=\{ \alpha^i \ | \ i \bmod(r+1) =0 \}$$
		and
		$$D=\{\alpha^s \ | \ s=-t, -t+1, \ldots, -1, 0, 1, \ldots, t-1, t\}.$$
		If $C$ is the cyclic code with the defining set $Z=L \cup D$, then $C$ is an LRC-LCD cyclic code of length $n$  over $\mathbb{F}_q$ with the minimum distance
		$$d \geq n-k-\left \lceil \frac{k}{r} \right \rceil +1~.$$
		Further, $C$ is an optimal LRC code for $a=0$ and $a=r$.
		\end {theorem}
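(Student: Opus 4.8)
The plan is to follow the proof of Theorem~\ref{th3} closely, the only new difficulty being the floor functions that appear once the hypothesis $r \mid k$ is dropped. First I would show $C$ is an LCD code by applying Theorem~\ref{th2}: it suffices to check that $Z = L \cup D$ is closed under $\alpha^i \mapsto \alpha^{-i} = \alpha^{\,n-i}$. This is immediate for $D$, which is symmetric about the exponent $0$, and for $L$ it follows from $(r+1)\mid n$: if $i \equiv 0 \pmod{r+1}$ then $n-i \equiv 0 \pmod{r+1}$, so $\alpha^{-i} \in L$. For the locality, $L$ is exactly the set $L$ of Theorem~\ref{th1} with $\ell = 0$; as in the proof of Theorem~\ref{th3} this set determines the locality, so $C$ has locality $r$ (since $L \subseteq Z$, the code $C$ is contained in the cyclic code whose defining set is $L$ alone, and locality passes to subcodes).

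The core of the argument is the dimension count. I would compute $|Z| = |L| + |D| - |L \cap D|$ using $|L| = \frac{n}{r+1}$, $|D| = 2t+1$, and $|L \cap D| = 2\left\lfloor \frac{t}{r+1}\right\rfloor + 1$ (the exponents in $\{-t,\ldots,t\}$, taken modulo $n$, that are multiples of $r+1$), which gives
\[
\dim C = \frac{nr}{r+1} - 2\left(t - \left\lfloor \frac{t}{r+1}\right\rfloor\right).
\]
Writing $t = (r+1)\left\lfloor \frac{t}{r+1}\right\rfloor + a$ with $a = t \bmod (r+1)$, the term $t - \left\lfloor \frac{t}{r+1}\right\rfloor$ becomes $r\left\lfloor \frac{t}{r+1}\right\rfloor + a$, so $\dim C = \frac{nr}{r+1} - 2r\left\lfloor \frac{t}{r+1}\right\rfloor - 2a$. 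Finally, since $nr - k(r+1) = (r+1)\left(\frac{nr}{r+1} - k\right)$, the definition of $t$ gives $\left\lfloor \frac{t}{r+1}\right\rfloor = \left\lfloor \frac{1}{2r}\left(\frac{nr}{r+1}-k\right)\right\rfloor$, and the hypothesis $2r \mid \left(\frac{nr}{r+1} - k - 2a\right)$ is precisely what forces this last floor to equal $\frac{1}{2r}\left(\frac{nr}{r+1} - k - 2a\right)$; hence $\dim C = k$. Keeping the two nested floors consistent and checking that the stated divisibility is exactly what pins down $\left\lfloor \frac{t}{r+1}\right\rfloor$ is the step I expect to be the main obstacle; everything else is essentially the BCH bound.

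For the minimum distance, $D$ supplies the $2t+1$ consecutive roots $\alpha^{-t},\ldots,\alpha^{t}$, so the BCH bound gives $d \ge 2t+2$. Since $2t = 2\left\lfloor \frac{nr-k(r+1)}{2r}\right\rfloor$, using the integrality of $2t$ together with a short case split on whether $r \mid k$ (to handle the ceiling) yields $2t+2 \ge n-k-\left\lceil \frac{k}{r}\right\rceil + 1$, the claimed bound. For the two boundary values of $a$: when $a = 0$ the hypothesis forces the floor defining $t$ to be exact, $t = \frac{nr-k(r+1)}{2r}$, so $d \ge 2t+2 = n-k-\left\lceil \frac{k}{r}\right\rceil + 2$ and $(\ref{eq1})$ makes this an equality; when $a = r$ we have $(r+1)\mid(t+1)$, so $\alpha^{t+1}$ and $\alpha^{-(t+1)} = \alpha^{\,n-t-1}$ both lie in $L \subseteq Z$, extending the run of consecutive roots to $2t+3$ and giving $d \ge 2t+4 = n-k-\left\lceil \frac{k}{r}\right\rceil + 2$, again optimal by $(\ref{eq1})$.
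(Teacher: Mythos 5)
Your proposal follows the paper's route in all essentials: LCD from the inversion-symmetry of $Z$, locality from $L$ via Theorem~\ref{th1}, dimension by inclusion--exclusion with $|L\cap D|=2\lfloor t/(r+1)\rfloor+1$, and the BCH bound on the $2t+1$ consecutive roots of $D$. The only substantive divergence is in the floor bookkeeping for the dimension, and that is where you have a gap. The paper writes $t-\lfloor t/(r+1)\rfloor$ as a difference of two floors whose arguments differ by an integer and collapses it via $\lfloor x\rfloor-\lfloor y\rfloor=\lfloor x-y\rfloor$; you instead split off $a=t\bmod(r+1)$ and reduce everything to evaluating $\lfloor t/(r+1)\rfloor=\left\lfloor\frac{1}{2r}\left(\frac{nr}{r+1}-k\right)\right\rfloor$. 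But writing $\frac{nr}{r+1}-k=2rq+2a$ per the hypothesis, that floor equals $q+\lfloor a/r\rfloor$, which is the value $q$ you claim only when $a<r$; for $a=r$ it is $q+1$ and your formula returns dimension $k-2r$, not $k$. Since the theorem explicitly advertises the case $a=r$, you must either patch this or switch to the paper's uniform identity. The cleanest patch is to observe that $a=r$ is actually incompatible with the hypotheses: $a=r$ makes the divisibility condition read $2r\mid\left(\frac{nr}{r+1}-k\right)$, which gives $t=(r+1)\cdot\frac{1}{2r}\left(\frac{nr}{r+1}-k\right)$ and hence $a=t\bmod(r+1)=0$, a contradiction. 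With that observation your dimension count is correct (and the advertised $a=r$ case is vacuous).

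Two smaller remarks. For the general distance bound, your "case split on whether $r\mid k$" is not quite the right lever: what the integrality argument actually needs is $2t+2=n-k-\frac{k}{r}-\frac{2a}{r}+2$ together with $\frac{2a}{r}<2$, which yields $2t+2>n-k-\lceil k/r\rceil$ and then the claimed bound because both sides are integers; this is exactly the paper's fractional-part computation. Finally, your treatment of the $a=r$ optimality case is genuinely different from (and nicer than) the paper's: rather than evaluating $\left\{\frac{a(r+1)}{r}\right\}=0$, you note $(r+1)\mid(t+1)$ places $\alpha^{\pm(t+1)}$ in $L$, lengthening the consecutive run to $2t+3$ and giving $d\ge 2t+4$ directly --- a legitimate strengthening, though, as noted, that case never arises.
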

		
		\begin{proof}
						It can easily be shown that $C$ is an LCD code by the same argument as in Theorem \ref{th3}.
			From Theorem \ref{th1}, the set $L$ gurantees the locality of the code $C$ to be $r$ for $\ell=0$.
			Dimension of $C$ is
\begin{align}\label{eq2}
 n-|Z| &= n-(|L|+|D|-|L \cap D|)  \nonumber \\
	&= \frac{nr}{r+1}-2 \left( t- \left \lfloor \frac{t}{r+1}  \right \rfloor \right)~. 
\end{align}
Now 
\begin{equation} \label{eq3}
t- \left \lfloor \frac{t}{r+1} \right \rfloor = \left \lfloor \frac{n}{2} - \frac{k(r+1)}{2r} \right \rfloor- \left \lfloor \frac{n}{2(r+1)} - \frac{k}{2r} \right \rfloor~.
\end{equation}
For some real numbers $x$ and $y$, if $x-y$ is an integer then $\lfloor x \rfloor - \lfloor y \rfloor=\lfloor x-y \rfloor$ . Since $2r\bigm|\left( \frac{nr}{r+1}-k-2a \right)$,
\begin{align*}
\frac{n}{2} - \frac{k(r+1)}{2r}&- \left( \frac{n}{2(r+1)} - \frac{k}{2r} \right) = \left( \frac{nr}{2(r+1)} - \frac{k}{2} \right) \\
&= \frac{1}{2} \left( \frac{nr}{2(r+1)} -k-2a \right)+a \\
&= r \left( \frac{1}{2r} \left( \frac{nr}{2(r+1)} -k-2a \right)  \right) +a
\end{align*}
is an integer. Hence from equation \eqref{eq3}, 
$t- \left \lfloor \frac{t}{r+1} \right \rfloor = \frac{1}{2} \left( \frac{nr}{r+1} - k \right).$  and from equation \eqref{eq2}, dimension of $C$ is $k$.
Since the set $D$ contains $2t+1$ consecutive roots of the generator polynomial $g(x)$ of $C$, we get
			\begin{eqnarray*}
				d &\geq& 2t+2 \\
				&=& 2 \left \lfloor \frac{nr-k(r+1)}{2r} \right \rfloor+2 \\
				&=& 2 \left \lfloor \frac{(r+1)}{2r} \left( \frac{nr}{r+1} -k \right)\right \rfloor+2 \\
				&=&  2 \left \lfloor \frac{(r+1)}{2r} \left( \frac{nr}{r+1} -k -2a \right) + \frac{a(r+1)}{r} \right \rfloor+2~.
			\end{eqnarray*}
			Since $2r\bigm|\left( \frac{nr}{r+1}-k-2a \right)$, $\frac{(r+1)}{2r} \left( \frac{nr}{r+1} -k -2a \right) $ is an integer. Hence
			
			\begin{eqnarray*}
				d &\geq& n- \frac{k(r+1)}{r}-\frac{2a(r+1)}{r} +2 \left \lfloor \frac{a(r+1)}{r} \right\rfloor +2 \\
				&=& n-k-\frac{k}{r} - 2\left\{ \frac{a(r+1)}{r} - \left \lfloor \frac{a(r+1)}{r} \right\rfloor \right\} +2\\
				&\geq& n-k-\left \lceil \frac{k}{r} \right \rceil - 2\left\{ \frac{a(r+1)}{r} - \left \lfloor \frac{a(r+1)}{r} \right\rfloor \right\} +2~.
			\end{eqnarray*}
			\underline{For $a=0$ and $a=r$},
			$$\frac{a(r+1)}{r} - \left \lfloor \frac{a(r+1)}{r} \right\rfloor=0~,$$
			which implies that
			$$d\geq n-k-\left \lceil \frac{k}{r} \right \rceil +2~.$$
			From Theorem $2.1$, $C$ is an optimal LRC code.
			\flushleft
			\underline{For $0<a<r$},
			$$\frac{a(r+1)}{r} - \left \lfloor \frac{a(r+1)}{r} \right\rfloor <1~,$$
			which implies that
			$$
			d>n-k-\left \lceil \frac{k}{r} \right \rceil \geq  n-k- \left
			\lceil \frac{k}{r} \right \rceil +1~.
			$$
		\end{proof}
		
		\begin{remark*}
			For $a=0$, Theorem \ref{th4} reduces to Theorem \ref{th3}.
		\end{remark*}
		
		\begin{example}
			Let $q=17$, $n=16$ and $r=3$. Let $\alpha$ be a primitive $n$-th root of unity.
			Then we have $$L=\{ \alpha^0, \alpha^4, \alpha^{8}, \alpha^{12} \}~.$$
			For $k=8$, $$t=\left \lfloor \frac{nr-k(r+1)}{2r} \right \rfloor =2~.$$
			and $a= t \bmod (r+1) =2$. Since $\left( \frac{nr}{r+1} -k-2a \right)=0$ is divisible by $2r$, take the set $D$ as
			$$D=\{\alpha^{-2}, \alpha^{-1}, \alpha^0, \alpha^1,\alpha^2 \}~.$$
			
			If $C$ is the cyclic code with the defining set $Z=L \cup D$, then $C$ is an LCD code and also LRC code of length $n=16$, dimension $k=8$, locality $r=3$ and the minimum distance $d \geq 6$.
			It satisfies the bound on the minimum distance given in Theorem \ref{th4}, i.e.,
			$$d \geq n-k-\left \lceil \frac{k}{r} \right \rceil +1 = 6~.$$
		\end{example}
		
		In the following example, it is shown that for some value of $k$, we get optimal cyclic LRC codes and for some value of $k$ the minimum distance is one less than the optimal value.
		
		\begin{example}
			Let $q=67$, $n=66$ and $r=5$. Let $\alpha$ be a primitive $n$-th root of unity.
			Then we have $$L=\{ \alpha^0, \alpha^{6}, \alpha^{12}, \alpha^{18}, \alpha^{24}, \alpha^{30}, \alpha^{36}, \alpha^{42}, \alpha^{48}, \alpha^{54}, \alpha^{60} \}~.$$
			\underline{For $k=35$}, \\
			$$t=\left \lfloor \frac{nr-k(r+1)}{2r} \right \rfloor =12~,$$
			and $a=0$. Then $\left( \frac{nr}{r+1} -k-2a \right)=20$, which is divisible by $2r$. Take the set $D$ as 
$$D=\{\alpha^{-12}, \alpha^{-11}, \ldots, \alpha^{-1}, \alpha^{0}, \alpha^{1}, \ldots, \alpha^{11}, \alpha^{12}\}~.$$
			If $C$ is the cyclic code with the defining set $Z=L \cup D$, then $C$ is an LRC-LCD code of length $n=66$, dimension $k=35$, locality $r=5$ and the minimum distance $d \geq 26$. Further $d$ satisfies the upper bound given in (\ref{eq1}), i.e., $d \leq n-k-\left \lceil \frac{k}{r} \right \rceil +2 = 26$,
			Therefore $C$ is an optimal LRC code.  \\

			\noindent \underline{For $k=37$}, $$t=\left \lfloor \frac{nr-k(r+1)}{2r} \right \rfloor =10~,$$
			and $a= t \bmod (r+1) =4$. Then $\left( \frac{nr}{r+1} -k-2a \right)=10$, which is divisible by $2r$. Take the set $D$ as
			$$D=\{\alpha^{-10}, \alpha^{-9}, \ldots, \alpha^{-1}, \alpha^{0}, \alpha^{1}, \ldots, \alpha^{9}, \alpha^{10}\}~.$$
			If $C$ is the cyclic code with the defining set $Z=L \cup D$, then $C$ is an LRC-LCD code of length $n=66$, dimension $k=37$, locality $r=5$ and the minimum distance $d \geq 22$. It satisfies the bound on the minimum distance given in Theorem \ref{th4}, i.e.,
			$$d \geq n-k-\left \lceil \frac{k}{r} \right \rceil +1 = 22.$$
		\end{example}

\section{Conclusion}
In this paper, we have established a connection between LRC codes and LCD codes. We obtained some conditions on the construction of cyclic LRC codes which are also LCD (reversible) codes. We believe that the LRC-LCD codes constructed here would be useful for distributed data storage systems.

\bibliography{LRC-LCD}
\bibliographystyle{IEEEtran}

\end{document}